
\documentclass[12pt]{article}


\usepackage{authblk}
\usepackage{amssymb}
\usepackage{amsmath}
\usepackage{mathrsfs}
\usepackage{amsfonts}
\usepackage{braket}
\usepackage{color}
\usepackage{graphicx}
\usepackage{tikz}
\usetikzlibrary{shapes,snakes}
\usepackage{pgfplots}
\pgfplotsset{compat=1.9} 
%
\newtheorem{thm}{Theorem}

\newtheorem{lem}[thm]{Lemma}

\newtheorem{defn}[thm]{Definition}
\newcommand{\norm}[1]{\left\Vert#1\right\Vert}
\newcommand{\abs}[1]{\left\vert#1\right\vert}

\def\XXint#1#2#3{{\setbox0=\hbox{$#1{#2#3}{\int}$ }
\vcenter{\hbox{$#2#3$ }}\kern-.6125\wd0}}



\topmargin 0.0cm
\oddsidemargin 0.2cm
\textwidth 16cm 
\textheight 21cm
\footskip 1.0cm




\newcounter{lastnote}

\title{Covariant Quantum White Noise from Light-like Quantum Fields}
\author{Radhakrishnan Balu}
\affil{Army Research Laboratory Adelphi, MD, 21005-5069, USA \\
       radhakrishnan.balu.civ@mail.mil
      }  
\affil{Department of Mathematics \&
	Norbert Wiener Center for Harmonic Analysis and Applications, 
	University of Maryland,
	College Park, MD 20742.\\
	rbalu@umd.edu}          
\date{Received: date / Accepted: \today}
\begin{document}
\maketitle

\begin{abstract}
 We derive covariant Weyl operators for light-like fields, with the massless Weyl fermion as an illustrative example, in such a way that they correspond to quantum white noises in vacuum state of a symmetric Fock space. First, we build a representation of a light-like little group in terms of Weyl operators. We then use this construction to induce a representation of Poincar$\grave{e}$ group to construct relativistic quantum white noises from the fields via Mackey{'}s systems of imprimitivity (SI) machinery. Our construction proceeds by fashioning the fermionic processes on a symmetric Fock space using reflection and identifying the corresponding processes on the isomorphic white noise space. 
\end{abstract}

\section{Introduction}
\label{intro}
Quantum white noise \cite {Hida1977} is an essential tool to describe open quantum systems such as quantum optics based processes \cite {Accardi1990}, \cite {Belavkin2005},  \cite {Tezak2012}, \cite {KP1992} that is part of the quantum stochastic calculus (QSC) framework. A relativistically consistent version of QSC are also available in the literature \cite {Applebaum1995},\cite {Frigerio1989} as relativistic effects are an important consideration in fermionic systems of topological quantum systems \cite {Witten2016}. For in-depth discussions on quantum white noise we recommend the works of Accardi \cite {Accardi2002} and Obata \cite {Obata1994}. Here, we present a systematic formulation of covariant quantum white noise for all fundamental particles by extending the theory for relativistic quantum particles, that is based on systems of imprimitivity, to field theoretic context. Systems of imprimitivity \cite {WignerLz1939}, \cite{Mackey1963}, \cite {Varadarajan1985} is a powerful characterization of dynamical systems, when the configuration space of a quantum system is described by a group, from which infinitesimal forms in terms of differential equations ($Shr\ddot{o}dinger$, Heisenberg, and Dirac etc), and the canonical commutation relations can be derived \cite {Wigner1949} \cite {Rad2018} and \cite {Rad2019}. In our earlier work we have constructed covariant Quantum Fields via Lorentz Group Representation of Weyl Operators \cite {Rad2020}. Here we continue this program by building the Fock spaces for massless Weyl fermions and then the covariant white noises.

Let us review the notion of SI and an important theorem by Mackey that characterizes such systems in terms of induced representations, key notions in Clifford algebras, spinor fields, and Schwartz spaces \cite {Varadarajan1985} before discussing our main result.

\section {Little groups (stabilizer subgroups)}
Some of the different systems of imprimitivity that live on the orbits of the stabilizer subgroups are described below. It is good to keep in mind the picture that SI is an irreducible unitary representation of Poincar$\grave{e}$ group $\mathscr{P}^+$ induced from the representation of a subgroup such as $SO_3$, which is a subgroup of homogeneous Lorentz, as $(U_m(g)\psi)(k) = e^{i\{k,g\}}\psi(R^{-1}_mk)$ where g belongs to the $\mathscr{R}^4$ portion of the Poincar$\grave{e}$ group, m is a member of the rotation group, and the duality between the configuration space $\mathbb{R}^4$ and the momentum space $\mathbb{P}^4$ is expressed using the character the irreducible representation of the group $\mathbb{R}^4$ as:
\begin {align} \label {eq: poissonBr}
\{k,g\} &= k_0 g_0 - k_1 g_1 - k_2 g_2 - k_3 g_3, p \in \mathbb{P}^4. \\
\hat{p}:x &\rightarrow e^{i\{k,g\}}. \\
\{Lx, Lp \} &= \{ x, p \}. \\\
\hat{p}(L^{-1}x) &= \hat{Lp}(x).
\end {align}
In the above L is a matrix representation of Lorentz group acting on $\mathbb{R}^4$ as well as $\mathbb{P}^4$ and it is easy to see that $p \rightarrow Lp$ is the adjoint of L action on $\mathbb{P}^4$. The $\mathbb{R}^4$ space is called the configuration space and the dual $\mathbb{P}^4$ is the momentum space of a relativistic quantum particle.

The stabilizer subgroup of the Poincar$\grave{e}$ group $\mathscr{P}^+$ (Light-like particle) can be described as follows: \cite {Kim1991}: There is no frame in which the relativistic quantum particle is at rest but the frame where the momentum is proportional to $(1,0,0,1)$ has the stabilizer subgroup with elements of the form with generators in terms of Pauli matrices as $\sigma_3, N_1 = \begin{bmatrix} 0 & 1 \\ 0 & 0 \end{bmatrix}, N_1 = \begin{bmatrix} 0 & i \\ 0 & 0 \end{bmatrix} $, that is a rotation around the Z axis and the boost $\Lambda_p$ in the spatial direction \cite {Kim1991}. This subgroup is isomorphic to the two dimensional Euclidean group $E_2$, a Weyl representation of which will help us to build the desired representation of the Poincar$\grave{e}$ group.
\
\section {Fiber bundle representation of relativistic quantum particles}

The states of a freely evolving relativistic quantum particles are described by unitary irreducible representations of Poincar$\grave{e}$, which is a semidirect product of two groups, group that has a geometric interpretation in terms of fiber bundles. 

\begin {defn}
Let A and H be two groups and for each $h\in{H}$ let $t_{h}:a\rightarrow{h[a]}$ be an automorphism (defined below) of the group A. Further, we assume that $h\rightarrow{t_h}$ is a homomorphism of H into the group of automorphisms of A so that
\begin {align} \label{semidirectEq}
h[a] &= hah^{-1}, \forall{a\in{A}}. \\
h &= e_H, \text{  the identity element of H}. \\
t_{{h_1}{h_2}} &= t_{h_1}t_{h_2}.
\end {align}
Now, $G=H\rtimes{A}$ is a group with the multiplication rule of $(h_1,a_1)(h_2,a_2) = (h_1{h_2},a_1{t_{h_1}}[a_2])$. The identity element is $(e_H,e_A)$ and the inverse is given by $(h,a)^{-1} = (h^{-1},h^{-1}[a^{-1}]$. When H is the homogeneous Lorentz group and A is $\mathbb{R}^4$ we get the Poincar$\grave{e}$ group $\mathscr{P}=H\rtimes{A}$ via this construction. The covering group of inhomogeneous Lorentz is also a semidirect product as $\mathscr{P}^* = H^*\rtimes \mathbb{R}^4$ and as every irreducible projective representation of $\mathscr{P}$ is uniquely induced from a representation of $\mathscr{P}^*$ we will work with the covering group, whose orbits in momentum space are smooth, in the following.
\end {defn}
We will need the following lemma for our discussions on constructing induced representations using characters of an abelian group as in the case of equation \eqref {eq: poissonBr}.
\begin {lem} (Lemma 6.12 \cite {Varadarajan1985})
Let $h \in H$. Then, $\forall x \in \hat{A}$ where $\hat{A}$ is the set of characters of the group A (which in our case is $\mathbb{R}^4$), there exists one and only $y \in \hat{A}$ such that $y(a) = x(h^{-1}[a]), \forall a \in A$. If we write y = h[x], then $h,x \rightarrow h[x]$ is continuous from $H \times \hat{A}$ into $\hat{A}$ and $\hat{A}$ becomes a H-space. Here, y can be thought as the adjoint for action of H on $\hat{A}$ and the map $\hat{p}$ in equation\eqref {eq: poissonBr} is such an example that is of interest to our constructions. In essence, we have Fourier analysis when restricted to the abelian grou p A of the semidirect product.
\end {lem}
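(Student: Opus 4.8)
The plan is to produce the character $y$ by an explicit formula, verify the two algebraic assertions (existence/uniqueness and the $H$-space axioms), and finally address continuity, which is the only nonroutine point. Given $h\in H$, the map $t_h:a\mapsto h[a]$ is a topological automorphism of $A$, so its inverse is $t_{h^{-1}}:a\mapsto h^{-1}[a]$. For a character $x\in\hat A$, i.e.\ a continuous homomorphism $x:A\to\mathbb{T}$, I would simply set $y:=x\circ t_{h^{-1}}$, so that $y(a)=x(h^{-1}[a])$ for all $a\in A$; being the composition of a continuous character with a continuous automorphism, $y$ is again a continuous character, hence $y\in\hat A$, and uniqueness is automatic because the displayed identity pins down $y$ pointwise. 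Writing $h[x]:=y$, the relation $t_{e_H}=\mathrm{id}_A$ gives $e_H[x]=x$, and $t_{h_1 h_2}=t_{h_1}t_{h_2}$ forces $t_{(h_1 h_2)^{-1}}=t_{h_2^{-1}}t_{h_1^{-1}}$, so that
\[
\big((h_1 h_2)[x]\big)(a)=x\big(h_2^{-1}[h_1^{-1}[a]]\big)=\big(h_2[x]\big)(h_1^{-1}[a])=\big(h_1[h_2[x]]\big)(a),
\]
which exhibits $x\mapsto h[x]$ as a left action of $H$ on $\hat A$.

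The genuinely nontrivial step — and the one I expect to be the main obstacle — is joint continuity of $(h,x)\mapsto h[x]$ for the usual topology on $\hat A$, namely uniform convergence on compact subsets of $A$. I would fix a compact $K\subseteq A$ and an $\eps>0$, use local compactness of $H$ to take a compact neighborhood $C$ of the given $h$, and invoke continuity of the action map $(h',a)\mapsto (h')^{-1}[a]$ — inherited from the group law of $G=H\rtimes A$ — to see that $K':=\{(h')^{-1}[a]:h'\in C,\ a\in K\}$ is compact, being the image of $C\times K$ under that map, and that it contains $h^{-1}[K]$ and every $(h')^{-1}[K]$ with $h'\in C$. For $h'\in C$, $x'\in\hat A$ and $a\in K$ I would then split
\[
\abs{x'((h')^{-1}[a])-x(h^{-1}[a])}\le\sup_{b\in K'}\abs{x'(b)-x(b)}+\abs{x((h')^{-1}[a])-x(h^{-1}[a])}.
\]
The first summand is $<\eps$ once $x'$ lies in the basic $\hat A$-neighborhood of $x$ cut out by $K'$ and $\eps$; the second tends to $0$ uniformly in $a\in K$ as $h'\to h$, because $K$ is compact and $(h',a)\mapsto x((h')^{-1}[a])$ is continuous (the tube lemma). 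Hence $h'[x']\to h[x]$ uniformly on $K$, i.e.\ in $\hat A$; the $H$-space conclusion is then just a restatement of the action property already checked.

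Finally I would specialize to the case the paper actually uses, $A=\mathbb{R}^4$, where $\hat A\cong\mathbb{P}^4=\mathbb{R}^4$ via $x\leftrightarrow k$ with $x(a)=e^{i\{k,a\}}$, the relevant $h=L$ acts linearly, and $h[x]$ is precisely the contragredient map $k\mapsto Lk$ appearing in \eqref{eq: poissonBr}. In this concrete setting existence, uniqueness, both $H$-space axioms, and continuity are immediate from the smoothness of matrix multiplication and inversion, and the lemma reduces to the identity $\hat p(L^{-1}x)=\widehat{Lp}(x)$ recorded above. So the only real work in the general statement is point-set topological, and it evaporates in the Lorentz setting needed in the sequel.
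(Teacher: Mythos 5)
The paper does not actually prove this statement: it is imported verbatim as Lemma 6.12 of Varadarajan's text and used as a black box, so there is no in-paper argument to compare against. Your proof is correct and is essentially the standard one from that source: the explicit formula $y=x\circ t_{h^{-1}}$ settles existence and uniqueness, the identity $t_{(h_1h_2)^{-1}}=t_{h_2^{-1}}t_{h_1^{-1}}$ gives the left-action axioms, and the compact-open-topology estimate (compactness of $K'=\{(h')^{-1}[a]\}$ plus the tube-lemma step) gives joint continuity. The only hypothesis you lean on implicitly is that $(h,a)\mapsto h[a]$ is jointly continuous, which you correctly attribute to the topological group structure of $G=H\rtimes A$; your closing specialization to $A=\mathbb{R}^4$ with $\hat A\cong\mathbb{P}^4$ and $h[x]$ the contragredient $k\mapsto Lk$ is exactly the instance the paper invokes in equation \eqref{eq: poissonBr}.
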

\section {Clifford Algebras and Spinor Fields}
The states of a relativistic particle are usually described in spinor fields in space-time that are based on Clifford algebras and so let use recollect the relevant notions here. Let us denote by $\rho$ the unique, up to an equivalence, finite dimensional irreducible representation of a simple associative Clifford algebra $\mathscr{C}$ of even dimension $n \ge 4$ with the corresponding nonsingular symmetric bilinear form B. Let $\Sigma$ be the the vector space in which $\rho$ acts and $G_B$ be the orthogonal group of $\Sigma$. Then, we can consider the automorphism $x \rightarrow x^L$ of $\mathscr{C}$ induced by $L \in G_B$ which may be thought of as an irreducible representation of $\mathscr{C}$ in $\Sigma$. As a consequence we have an invertible linear transformation $S_1$ on $\Sigma$ such that 
\begin {equation} \label {ref:rhorep}
\rho(x^L) = S_1 \rho(x) S_1^{-1}.
\end {equation}
This can be seen by looking at the action on the vector space as $\rho(\pi(Lv)) = S_1 \rho(\pi(v)) S_1^{-1}, v \in \Sigma$. We may choose to have $S_1$ unit determinant and by Schur{'}s lemma it is up to a constant multiplication factor. That means for $L_1, L_2, L_1 L_2 \in G_B$ we select $S_1, S_2, S_0$ such that the above equation \ref {ref:rhorep} is satisfied giving us the relation
\begin {equation} \label {ref:unidet}
S_0 = \sigma S_1 S_2 \text{ with }\abs{\sigma} = 1.
\end {equation}
Now, we can proceed to lift this representation to that of the covering group by outlining details of Theorem 9.5 \cite {Varadarajan1985}. Let us denote by $G_\Sigma$ the set of all invertible linear transformations of $\Sigma$ with unit determinant that forms an lcsc group. Then we can have a closed subset of $(L, S_1) \in G_B \times G_\Sigma$ that obeys $\rho(\pi(Lv))S_1 = S_1 \rho(\pi(v))$ and the map $(L,S_1)\rightarrow L$ is continuous on to $G_B$.This guarantees a Borel map $L \in G_B \rightarrow S_1(L) \in G_\Sigma$ such that equation \ref {ref:rhorep} is satisfied for $x \in \mathscr{C}$. As equation \ref {ref:unidet} implies that $L \rightarrow S_1(L)$ is a projective (but non unitary) representation we have the relation $S_1(L_1 L_2) = m(L, L_1)S_1(L)1)S_1(L_2), \forall L_1,L_2 \in G_B$. By associative law $L(L_1 L_2) = (LL_1)L_2$ we get the map m is a multiplier of the semisimple group $G_B)$ the representation $S_1$ can be lifted to the covering group $G_B^*$ as S{'}. Now, S{'} is uniquely determined by $S_1$ as the covering group does not have any one dimensional representation other than the trivial one. We thus get the spinor representation of $G_B^*$ as the map $S{'}(l \rightarrow S{'}(l)), l \in G_B^*$ satisfying the relation $\rho(\pi(L^l v)) = S{'}(l)\rho(\pi(v))S{'}(l)^{-1}$.

Let $V^0$ be a real vector space of even dimension $n \ge 4$ and $B^0$ a nonsingular bilinear symmetric form on $V^0 \times V^0$ and the corresponding complexifications are V and B respectively. Now, the group of invertible endomorphisms of $V^0$ with unit determinant and leave $B^0$ invariant is semisimple and we denote by $G_B^0$ the connected component of identity of this group. By identifying each endomorphism of $V^0$ with its complexification we have $G_B^0 \subseteq G_B$. Then, we have a unique representation of the spinors as $\rho(\pi(L^l v)) = S{'}(l)\rho(\pi(v))S{'}(l)^{-1}, \forall v \in V^0, l \in G_B^*$. We can now use Dirac matrices $\gamma_r$ in this representation S as follows:
Let us select a basis $\{\omega_0,\dots,\omega_{n-1}\}$ for $V^0$ such that $B^0(\omega_r, \omega_s) = \epsilon_r \delta_{rs}$.
\begin {align}
\gamma_r &= \rho(\pi(\omega_r)). \\
\gamma_r^2 &= \epsilon_r \mathbb{I}, 0 \leq r \leq n-1.\\
\gamma_r \gamma_s + \gamma_s \gamma_r &= 0, r \neq s. \\
S(l)^{-1}\gamma_r S(l) &= \sum_{s = 0}^{n - 1} a_{rs}(L^l)\gamma_s. \label {eq:spinor}
\end {align}
In the above $a_{rs}(L)$ is the r-sth matrix element of L with respect to the basis $\{\omega_0,\dots,\omega_{n-1}\}$. The Dirac matrices $\gamma_0 = \mathbb{I}, \gamma_r =  \begin{bmatrix} 0 & -\sigma_i \\ \sigma_i & 0 \end{bmatrix}, \sigma_1 =  \begin{bmatrix} 0 & 1 \\ 1 & 0 \end{bmatrix}, \gamma_i =  \begin{bmatrix} 0 & -\sigma_i \\ \sigma_i & 0 \end{bmatrix}, \sigma_2 =  \begin{bmatrix} 0 & -i \\ i & 0 \end{bmatrix}, \gamma_i =  \begin{bmatrix} 0 & -\sigma_i \\ \sigma_i & 0 \end{bmatrix}, \sigma_3 =  \begin{bmatrix} 1 & 0 \\ 0 & -1 \end{bmatrix}, r=1,2,3$.
Now, we have $G_B^0$ as the connected homogeneous Lorentz group H and $G_B^*$ is the covering group $H^* = SL(2, \mathscr{C})$.
Let $h^*$ be the Lie algebra of $H^*$ and from equation \ref {eq:spinor}, we have using the linear transformation $\delta(m), m \in SL(2, \mathscr{C})$ in $\mathscr{R}^4$ as $\eta \rightarrow m\eta m^*, \eta \in \mathscr{R}^4$ and its Lie algebra counterpart $\grave{\delta}(X)$ the following relations:
\begin {equation}
S(m)^{-1}\gamma_r S(m) = \sum_{s=0}^3 a_{rs}(\delta(m))\gamma_s, r=01,2,3.
\end {equation}
In Lie algebraic terms the above becomes
\begin {equation}
[\grave{S}(X), \gamma_r] = -\sum_{s=0}^3 a_{rs}(\grave{\delta}(X))\gamma_s , r=0,1,2,3.
\end {equation}
It can be shown that
\begin{align}
\grave{S}(X) &= \begin{bmatrix} X & 0 \\ 0 & X \end{bmatrix}. \\
S(m) &= \begin{bmatrix} m & 0 \\ 0 & (m^{-1})^* \end{bmatrix}.
\end{align}
Let us now define the spinor fields that will form the fibers of the bundles we will construct to describe the dynamics of light like particles.
\begin {defn}
Let X be the space on which the Poincar$\grave{e}$ group $\mathscr{P}^* = H^*\rtimes R^4$ acts, that is the configuration space of fundamental particles, and $\Sigma$ be the space of four complex dimensional space on which the spin representation S of $H^*$ acts. The spinor field $\psi$ is a mapping $\psi: X \rightarrow \Sigma$ such that an automorphism $(\delta(h), x)$ on X , where x is the translation and $\delta(h)$ is the homogeneous Lorentz component, the field transforms covariantly as
\begin {equation}
\psi{'}(y) = S(h)\psi((\delta(h), x)^{-1}.y), (y \in X).
\end {equation}
\end {defn}
We will encode the spinor fields into the fibers of the bundles, zero mass introduces some subtleties that are described in detail next, whose sections form the Hilbert that lead to symmetric Fock space to describe the light like fields.
\
\section {Smooth Orbits in Momentum Space}
We need to described few ingredients to construct the Hilbert space of Weyl fermions namely, the fiber bundle, the fiber vector space, an inner product for the fibers, and an invariant measure. 
The 3+1 spacetime Lorentz group $\hat{O}(3,1)$-orbits  of the momentum space $\mathscr{R}^4$, where the systems of imprimitivity established will live, described by the symmetry $\hat{O}(3,1)\rtimes\mathbb{R}^4$. The orbits have an invariant measure $\alpha^+_m$ whose existence is guaranteed as the groups and the stabilizer groups concerned are unimodular and in fact it is the Lorentz invariant measure $\frac{dp}{p_0}$ for the case of forward mass hyperboloid.  The orbits are defined as:
\begin {align}
X^{+,1/2}_m &= \{p: p^2_0 - p^2_1 - p^2_2 - p^2_3 = m^2, p_0 > 0\}, \text{\color{blue} forward mass hyperboloid}.\\
X^{+,1/2}_m &= \{p: p^2_0 - p^2_1  - p^2_2 - p^2_3 = m^2, p_0 < 0\}, \text{\color{blue} backward mass hyperboloid}. \\
X_{00} &= \{0\}, \text{\color{blue} origin}.
\end {align}
Each of these orbits are invariant with respect to $\hat{O}(3,1)$ and let us consider the stabilizer subgroup of the first orbit at p=(1,1,0,0). Now, assuming that the spin of the particle is 1/2 and mass $m \rightarrow 0$ (zero mass boson) let us define the corresponding fiber bundles (vector) for the positive mass hyperboloid that corresponds to the positive-energy states by building the total space as a product of the orbits and the group $SL(2, C)$.
\begin {align}
\hat{B}^{+,1/2}_m &= \{(p,v) \text{   }p\in{\hat{X}^{+1/2}_m, }\text{   }v\in\mathscr{C}^4,\sum_{r = 0}^3 p_r \gamma_r v = 0\}. \\
\hat{\pi} &: (p,v) \rightarrow {p}. \text{  Projection from the total space }\hat{B}^{+,2}_m \text{ to the base }\hat{X}^{+,1/2}_m.
\end {align}
It is easy to see that if $(p, v) \in B_0^{+,1/2}$ then so is also $(\delta(h)p, S(h^{*-1})v)$. Thus, we have the following Poincar$\grave{e}$ group symmetric action on the bundle that encodes spinors into the fibers:
\begin {equation}
h,(p,v) \rightarrow (p,v)^h = (\delta(h)p, S(h^{*-1})v).
\end {equation}
For $m > 0$ the fiber of $B_M^{+,1/2}$ at $p^(m) = ((a + m^2)^{1/2}, 1, 0, 0)$ is spanned by the vectors
\begin {align}
v_1^{(m)} &= \frac{1}{2}me_1 + \frac{1}{2}(1 + (1 + m^2)^{1/2})e_3. \\
v_2^{(m)} &= \frac{1}{2}me_4 + \frac{1}{2}(1 + (1 + m^2)^{1/2})e_2.
\end {align}
When we take the limit $m \rightarrow 0+$ $v_1,V-2$ converge to $e_3,e_2$ that space the fiber of $B_0^+$ at (1,1,0,0). The covering group $H^*$ is transitive on $X_m^+, X_0^+$ implies that the same convergence is true for any point. That is, if $p \in X_0^+$ then there are points $p^{(m)} \in X_m^+$ that converge to p as $m \rightarrow 0+$. This has the property that any vector v in the fiber of $B_0^+$ at p can be expressed as the limit of $v^{(m)}$ which is in fiber of $B_m^2$ at $p^{(m)}$. The same set of arguments can be applied to $B_{-m}^{+,1/2}$ implying that the bundles $B_{-m}^{+,1/2}$ also converge to $B_0^+$.
The endomorphism (chirality or helicity operator) $\Gamma = i\gamma_0\gamma_1\gamma_2\gamma_3$ transforms $B_m^{+,2}(p) \rightarrow B_{-m}^{+,1/2}(p), \forall p \in X_m^+, m > 0$ as it anticommutes with all the gammas. In the limit $\Gamma$ leaves the fibers of $B_0^+$ invariant leading to higher degenerecies with $\Gamma = \begin{bmatrix} 1 & 0 \\ 0 & -1 \end{bmatrix}$. This means $\Gamma$ commutes with all of $S(h)$ implying that $(p, v) \in B_0^+ \Rightarrow (p, \Gamma v) \in B_0^+$. If we impose either of the condition $\Gamma \psi = \pm \psi$ then we can use 2x2 the Pauli matrices for the $\gamma$s and we get the description for a Weyl fermion.

$\Gamma$ has eigen values $\pm 1$ at fiber (1, 0, 0, 1) and hence is true of all fibers. The stability group $E^*$ at (1, 0, 0, 1) given by the matrices $\begin{bmatrix} z & a \\ 0 & (z^{-1}) \end{bmatrix}, z,a \in \mathscr{C}, \abs{z}=1$, induces a representation on the fibers as $m_{z,a} \rightarrow z^{\pm 1}$.
Next step is to ensure that there is an Hermitian form on the fibers that is positive definite and left invariant with respect to S. The form $v\rightarrow p_0^{-1}\langle v, v \rangle$ can be shown to satisfy the condition and by letting $m \rightarrow 0+$ the form is still invariant and we have $E_2$ is the stabiliser group at (1,0,0,1).

Now, we can define the states of the light like particles on the Hilbert space $\hat{\mathscr{H}}^{+,1/2}_0$, square integrable functions on Borel sections of the bundle $\hat{B}^{+,1/2}_0 = \{(p,v) : (p,v) \in B_0^+, \Gamma v = \mp v\}$.

with respect to the invariant measure $\beta^{+,1}_0$, whose norm induced by the inner product is given below:
\begin {equation} \label {eq: section}
 \norm{\phi}^2 = \int_{X^+_m}p_0^{-1}\langle\phi{p},\phi{p}\rangle.{d\beta}^{+,1}_0(p).
\end {equation}
The invariant measure and the induced representation of the Poincar$\grave{e}$ group from that of the Weyl fermion are given below:
\begin {align}
{d\beta}^{+,1}_0(p) &= \frac {dp_1 dp_2 dp_3} {2(p_1^2 + p_2^2 + p_3^2)}. \\
(U_{h,x}\phi)(p) &= \text{exp i}\{x, p\} \phi (\delta(h)^{-1}p)^h.
\end {align}

\
We need Schwartz spaces, and their duals the tempered distributions
to guarantee Fourier transforms, to move with ease between configuration and momentum
space descriptions and so let us define them.
\begin {defn} Let V be an n-dimensional Euclidean space with the positive definite product $\langle .,.\rangle$ and Lebesgue measure dv. for any translation invariant differential operator D and any complex polynomial q on V, the function $\phi \rightarrow sup_{v \in V} \abs{q(v)D(\phi)(v)}$ is a seminorm on $C_c^\infty(V)$ and the collection of these seminorms induces a locally convex topology for $C_c^\infty(V)$. Its completion may be identified with $\mathscr{S}(V)$, the space of $C^\infty$ functions on V for which $sup_{v\in V} \abs{q(v)D(\phi)(v)} < \infty$ for all q and D. Intuitively, it is the function space of all infinitely differentiable functions that are rapidly decreasing at infinity along with all partial derivatives 

Let $\{v_1, \dots, v_n\}$ be an orthonormal basis for V and $D_{v_1, \dots, v_n} = (\partial/\partial x_1)^{\nu_1} \dots (\partial/\partial x_n)^{\nu_n}$ linearly span the algebra of translation invariant differential operators, and $x_1, \dots, x_n$ be the linear coordinate functions on V associated with the chosen basis. A topology on $\mathscr{S}(V)$ can be induced by the collection of seminorms \\
$sup_{x_1 \dots , x_n} abs{(1+x_1^2 + \dots + x_n^2)^k(D_{v_1, \dots, v_n}\phi)(x_1, \dots , x_n)}$ for various k=0,1,2,..., and \\
$\nu_1, \dots, \nu_n \ge 0$.  A tempered distribution E is a complex valued linear functional on $C_c^\infty(V)$ which is continuous in the topology defined. By extending these functionals to $\mathscr{S}(V)$ we may regard a tempered distribution as a continuous linear functional on the Schwartz space $\mathscr{S}(V)$.
\end {defn}
Fourier transform is an an automorphism on Schwartz space as
\begin {equation}
\hat{\phi(x)} = (2\pi)^{-n/2}\int_V exp[-i \langle x,v\rangle]\phi(v)dv, \forall \phi \in \mathscr{S}(V), x \in V.
\end {equation}
The Fourier transform of a tempered distribution $E(\phi \rightarrow E(\phi), \phi \in \mathscr{S}(V))$ is given by $\hat{E}(\phi) = E(\hat{\phi})$. We denote the dual space of $\mathscr{S}(V)$ Schwartz functions that consists of the tempered distributions as $\mathscr{S}'(V)$ and they form the Gelfand nuclear triple $\mathscr{S}(\mathscr{H}_0^{+2}) \subset \mathscr{H}_0^{+2} \subset \mathscr{S}'(\mathscr{H}_0^{+2})$.

Let us now state and establish the main result for the case of massless Weyl fermion with light-like momentum by constructing a strict cocycle from the representation of a subgroup following the prescription (lemma 5.24) in Varadarajan{'} text. The SI is a consequence of strict cocycle property and  the construction is not canonical.

lemma 5.24 \cite{Varadarajan1985}: 
Let m be a Borel homomorphism of $G_0$ into M. Then there exists a Borel map b of G into M such that 
\begin {align} \label {eq: cocycle} 
B(e) &= 1. \\
b(gh) &= b(g)m(h), \forall (g,h) \in G \times G_0.  
\end {align}
Corresponding to any such map b, there is a unique stricy (G, X)-cocycle
f such that
\begin {equation} \label {eq: cocycle2}
f(g, g^1) = b(gg^1)b(g^1)^{-1}.
\end {equation}
$\forall (g,g^{-1}) \in G \times G$. f defines m at $x_0$. Conversely, f is a strict (G, X)-cocycle and b is a Borel map such that it satisfies \ref {eq: cocycle} equation pair, then the restriction of b to $G_0$ coincides with the homomorphism m defined of at $x_0$ and b satisfies \ref {eq: cocycle2}.

\begin {thm} Light-like Weyl representation of Poincar$\grave{e}$ group is a transitive system of imprimitivity. Quantum white noise space isomorphic to the Fock space arise from a second quantized Gelfand nuclear triple. 
\end {thm}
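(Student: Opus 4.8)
The plan is to prove the two assertions in turn: that the light-like Weyl representation obtained by inducing from the little group is a transitive system of imprimitivity in the sense of Mackey, and then that second-quantizing the Gelfand nuclear triple attached to its one-particle space produces exactly the Hida white-noise $L^2$-space, with the Fock vacuum going to the white-noise vacuum.

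\emph{Step 1 (cocycle and induced representation).} I would start from the helicity-$\pm\tfrac12$ representation of the light-like little group $E_2^* = E^*$ at $(1,0,0,1)$ -- the matrices $\bigl[\begin{smallmatrix} z & a \\ 0 & z^{-1}\end{smallmatrix}\bigr]$ with $z,a\in\mathbb{C}$, $|z|=1$, acting on the fiber by $m_{z,a}\mapsto z^{\pm 1}$. Feeding this homomorphism into Lemma 5.24 of \cite{Varadarajan1985} quoted above gives a Borel map $b$ on $\mathscr{P}^*$ with $b(e)=1$ and $b(gh)=b(g)m(h)$ for $h\in E_2^*$, hence a strict $(\mathscr{P}^*, X_0^{+,1/2})$-cocycle $f(g,g')=b(gg')b(g')^{-1}$ defining $m$ at the base point. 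This cocycle is what promotes the naive fiber action $(p,v)\mapsto (p,v)^{h}=(\delta(h)p,S(h^{*-1})v)$ to the well-defined unitary $(U_{h,x}\phi)(p)=e^{i\{x,p\}}\phi(\delta(h)^{-1}p)^{h}$ on $\hat{\mathscr{H}}_0^{+,1/2}$, the space of $L^2$ Borel sections of $\hat{B}_0^{+,1/2}$ against the invariant measure $\beta_0^{+,1}$; unitarity uses that the form $v\mapsto p_0^{-1}\langle v,v\rangle$ descends to the $m\to0+$ limit as a positive-definite $S$-invariant Hermitian form on the fibers.

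\emph{Step 2 (imprimitivity and transitivity).} I would then adjoin the canonical projection-valued measure $P(E)\phi=\mathbf{1}_E\cdot\phi$ for Borel $E\subseteq X_0^{+,1/2}$ and check the covariance relation $U_{h,x}\,P(E)\,U_{h,x}^{-1}=P(\delta(h)E)$, which is immediate from the displayed action of $U$. Since $H^*=SL(2,\mathbb{C})$ acts transitively on the light-cone orbit $X_0^{+,1/2}$, the pair $(U,P)$ is a \emph{transitive} system of imprimitivity, and by Mackey's imprimitivity theorem \cite{Mackey1963} it is unitarily equivalent to the representation of $\mathscr{P}^*$ induced from the chosen helicity representation of $E_2^*$; irreducibility of $U$ follows from irreducibility of that little-group representation. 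This establishes the first sentence of the theorem.

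\emph{Step 3 (second quantization and white noise).} For the second sentence, I would apply the symmetric second-quantization functor $\Gamma$ to the nuclear triple $\mathscr{S}(\hat{\mathscr{H}}_0^{+,1/2})\subset\hat{\mathscr{H}}_0^{+,1/2}\subset\mathscr{S}'(\hat{\mathscr{H}}_0^{+,1/2})$, obtaining the Fock triple $\Gamma(\mathscr{S})\subset\Gamma(\hat{\mathscr{H}}_0^{+,1/2})\subset\Gamma(\mathscr{S}')$ with $\Gamma(\hat{\mathscr{H}}_0^{+,1/2})=\bigoplus_{n\ge0}\mathrm{Sym}^n\hat{\mathscr{H}}_0^{+,1/2}$ the symmetric Fock space. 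By the Bochner--Minlos theorem -- applicable because $\mathscr{S}(\hat{\mathscr{H}}_0^{+,1/2})$ is nuclear -- the characteristic functional $\phi\mapsto\exp(-\tfrac12\norm{\phi}^2)$ on $\mathscr{S}(\hat{\mathscr{H}}_0^{+,1/2})$ is the Fourier transform of a Gaussian white-noise probability measure $\mu$ on $\mathscr{S}'(\hat{\mathscr{H}}_0^{+,1/2})$, and the Wiener--It\^o--Segal isomorphism yields a unitary $\Gamma(\hat{\mathscr{H}}_0^{+,1/2})\cong L^2(\mathscr{S}'(\hat{\mathscr{H}}_0^{+,1/2}),\mu)$ carrying the Fock vacuum to the constant $1$ and the $n$-particle subspace onto the $n$-th Wiener chaos. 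Under this unitary the annihilation and creation operators become the Hida derivative $\partial_t$ and its dual $\partial_t^*$, i.e. the quantum white-noise operators, so the white-noise space is exactly the GNS space of the vacuum state on the CCR algebra over $\hat{\mathscr{H}}_0^{+,1/2}$; the fermionic Weyl processes of the earlier sections are carried inside this symmetric Fock space by the reflection construction announced in the abstract and transported to the white-noise side by the same isomorphism.

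\emph{Main obstacle.} The delicate point is the $m\to0+$ limit in Step 1. One must verify that the limiting bundle $\hat{B}_0^{+,1/2}$ genuinely inherits a positive-definite $S$-invariant Hermitian form -- the $p_0^{-1}$ weight is the dangerous factor on the light cone -- and, more seriously, that the cocycle produced by Lemma 5.24 is \emph{strict} rather than merely defined almost everywhere, since strictness is precisely what makes the system of imprimitivity exact and hence governed by Mackey's theorem; this forces the Borel section $L\mapsto S_1(L)$ to be chosen compatibly with the little-group reduction, and it is the reason the construction is, as noted, not canonical. By contrast Step 3 is routine: nuclearity of the Schwartz space built earlier is all that Bochner--Minlos requires, and the remainder is standard white-noise analysis.
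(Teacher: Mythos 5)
Your proposal is essentially correct as a sketch, but it takes a genuinely different route from the paper at the decisive step. You induce the \emph{scalar} helicity representation $m_{z,a}\mapsto z^{\pm1}$ of the little group $E_2^*$ at the one-particle level, verify the imprimitivity relation for the canonical projection-valued measure there, and only afterwards apply the second-quantization functor and the Wiener--It\^o--Segal isomorphism. The paper instead performs the induction already at the second-quantized level: it first builds a first-order cocycle $v(g)$ with $v(gh)=v(g)+U_gv(h)$ for the little group, forms the projective Weyl representation $V_g=W(v(g),U_g)$ on $\Gamma_s(\hat{\mathscr{H}}_0^{+,1/2})$, and it is \emph{this} homomorphism $m:g\mapsto V_g$ that is fed into Lemma 5.24 to produce the strict $(\mathscr{P},X)$-cocycle; the field operators then arise directly as Stone generators of the induced Weyl family, which is what makes the resulting white noises ``covariant'' by construction rather than by functoriality. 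A second difference is the choice of nuclear triple: you take $\mathscr{S}\subset\mathscr{H}\subset\mathscr{S}'$ and second-quantize, whereas the paper builds the scale from iterates of the Hamiltonian $H=\sigma\bullet p$, i.e.\ $(E)=\Gamma_s(\lim_n H_n)\subset\Gamma_s(\hat{\mathscr{H}}_0^{+,2})\subset(E^*)$, which is the standard Hida--Obata construction adapted to this particular generator. Your route buys a cleaner separation of the group-theoretic and probabilistic parts (and your Bochner--Minlos/chaos-decomposition argument for the isomorphism $L^2((E^*),\mu)\cong\Gamma_s(\hat{\mathscr{H}}_0^{+,1/2})$ is more explicit than the paper's one-line assertion); the paper's route buys the covariance of the Weyl operators and hence of the white noises themselves, which is the stated point of the theorem. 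Your identification of the delicate issues --- the positivity of the $p_0^{-1}$-weighted form in the $m\to0+$ limit and the strictness (not mere almost-everywhere validity) of the cocycle --- is apt, and both routes face them equally.
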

\begin {proof}
Let $g:\mathscr{G} \rightarrow U_g(\hat{\mathscr{H}}^{+,1/2}_0)$ be a homomorphism from the two dimensional Euclidean group $\mathscr{G} = E_2$ to the unitary representation of the group in $\hat{\mathscr{H}}^{+,1/2}_0)$. We note that it is a stabilizer subgroup which is also closed at the momentum (m,0,0,m) and so $H/\mathscr{G}$ is a transitive space.

Consider a map $v(g):\mathscr{G} \rightarrow \hat{\mathscr{H}}^{+,1/2}_0$ satisfying the first order cocycle relation $v(gh) = v(g) + U_g v(h), g,h \in \mathscr{G}$.
An example of such a map is the following: \cite {KP1992} 
\begin {align*}
\mathscr{H} &= \oplus_{j=0}^\infty \mathscr{H}_j. \\
H &= 1 \oplus \oplus_{j=1}^\infty H_j. \\
U_t &= e^{-itH}, t \in \mathscr{G}. \\
v(t) &= tu_0 \oplus \oplus_{j=1}^\infty (e^{-itH_j} u_j
- u_j).
\end {align*}

Now, we can define the Weyl operator $V_g = W_g (v(g), U_g)$ where $g \in \mathscr{G}$ in the Fock space $\Gamma_s(\hat{\mathscr{H}}^{+,1/2}_0))$. 

 This is a projective unitary representation satisfying the commutator relation $V_g V_h = e^{iIm\langle v_g, U_g v_h \rangle} V_h V_g$ and let us denote the homomorphism from $\mathscr{G}$ to $V_g$ by m. This guarantees a map b that satisfies $b(gh) = b(g)m(h), g \in \mathscr{P}, h \in \mathscr{G}$ and such map can be constructed by considering the map
$c(x \rightarrow c(x)$ as Borel section of $\mathscr{P} / \mathscr{G}$ (the choice of this section not a canonical one but immaterial to our purpose here) with $c(x_0) = e$. The map
$\beta$ maps $g \in \mathscr{P} \rightarrow g \mathscr{G}$
\begin {align}
a(g) &= c(\beta (g))^{-1}. \\
b(g) &= m(a(g)). 
\end {align}
Then the strict cocycle $\phi$ satisfies $\phi(g_1, g_2) = b(g_1 g_2)b(g_2)^{-1}$.

We can now set the SI relation using the above cocycle as follows:
\begin {align*}
U_{h, x}\phi(p) &= exp i\{x, p\} \phi(g, g^{-1}x) f(g^{-1}x),  f\in\mathscr{H} \\
&\text {  character representation is defined in equation } \eqref{eq: poissonBr}.\\
P_E F &= \chi_E f \text { Position  operator}. \\
\end {align*}

We can construct the conjugate pair of field operators for the Fock space $\Gamma_s(\hat{\mathscr{H}}^{+,1/2}_0))$ as follows:

Let $p_g$ be the stone generator for the family of operators $P_{gt,p}, g \in \mathscr{P}, t \in \mathbb{R}$ and q(g) = p(ig) and we get the creation and annihilation operators as $a(g)^\dag = \frac {1}{2} ( q(g) - i p(g)$ and  $a(g) = \frac {1}{2} ( q(g) + i p(g)$.

Let us consider the Hamiltonian $H = \sigma \bullet p$, used to describe edge states in topological materials, where $\sigma$ is vector of Pauli operators and p is the momentum of the Hilbert space $\hat{\mathscr{H}}^{+,2}_0$. 
Then, the Gelfand triple we are interested in is $(E) = \Gamma_s(lim_{n\rightarrow\infty} H_n) \subset \Gamma_s(\hat{\mathscr{H}}^{+,2}_0) \subset (E^*) = \Gamma_s(lim_{n\rightarrow\infty} H_{-n})$ with respect to which the operators $a(g)^\dag$ and a(g) are white noises. Here, $H_n$ refers to n-times composition of the Hamiltonian H. These operators are defined on the white noise space (with our abuse of notation it appears to describe them on Fock space) that is isomorphic to the Fock space as $L^2((E^*), \mu) \cong \Gamma_s(\hat{\mathscr{H}}^{+,2}_0)$ and the corresponding Weyl operators have exponential functionals as domain instead of coherent vectors. 

Strictly speaking we need to build fermionic Fock space and set up the corresponding annihilation and creation operators. Instead, we have constructed the bosonic Fock space and using the unification of the \cite {Hudson1986} two spaces, $dB= (-1)^\Lambda dA$, we can construct the fermionic processes. This involves fashioning reflection processes and we leave out the details as our main focus here is the white noise process.
$\blacksquare$
\end {proof}

\section {Summary and conclusions}
We derived the covariant field operators using induced representations of groups and  expressed them in terms of systems of imprimitivity. We established the results for the massless Weyl fermion case by inducing a representation of Poincar$\grave{e}$ group from the subgroup that is a stabilizer at the momentum (1,0,0,1). Finally, we derived white noise operators on the white noise space which is isomorphic to the symmetric Fock space we constructed.


\end{document}